\documentclass{IEEEtran}
\IEEEoverridecommandlockouts
\def\BibTeX{{\rm B\kern-.05em{\sc i\kern-.025em b}\kern-.08em
    T\kern-.1667em\lower.7ex\hbox{E}\kern-.125emX}}
\usepackage{amsmath,amsfonts}
\usepackage{amssymb}
\usepackage{algorithmic}
\usepackage{algorithm}
\usepackage{array}
\usepackage[caption=false,font=footnotesize,labelfont=rm,textfont=rm]{subfig}
\usepackage{textcomp}
\usepackage{stfloats}
\usepackage[caption=false,font=footnotesize]{subfig}
\usepackage{url}
\usepackage{verbatim}
\usepackage{graphicx}
\usepackage{cite}
\usepackage{makecell}
\usepackage{booktabs}
\usepackage{multirow}
\usepackage{color}
\usepackage{hyperref}

\newtheorem{theorem}{Theorem} 
 
\newtheorem{lemma}{Lemma} 
\newtheorem{corollary}{Corollary}

\newtheorem{remark}{Remark}

\usepackage[compact]{titlesec}

\newcommand{\ti}{\text{i}}
\newcommand{\tr}{\text{r}}
\newcommand{\ts}{\text{s}}
\newcommand{\tet}{\text{t}}
\newcommand{\tH}{\text{H}}
\newcommand{\tV}{\text{V}}
\newcommand{\tR}{\text{R}}
\newcommand{\tT}{\text{T}}
\newcommand{\tS}{\text{S}}
\newcommand{\tI}{\text{I}}
\newcommand{\sinc}{\text{sinc}}

\newcommand{\cP}{\mathcal{P}}

\newcommand{\cH}{\mathcal{H}}
\newcommand{\cM}{\mathcal{M}}
\newcommand{\ba}{\mathbf{a}}
\newcommand{\bx}{\mathbf{x}}
\newcommand{\by}{\mathbf{y}}
\newcommand{\bz}{\mathbf{z}}

\newcommand{\bl}{\mathbf{l}}

\newcommand{\bn}{\mathbf{n}}
\newcommand{\bq}{\mathbf{q}}
\newcommand{\bm}{\mathbf{m}}
\newcommand{\bR}{\mathbf{R}}
\newcommand{\bE}{\mathbf{E}}
\newcommand{\bH}{\mathbf{H}}
\newcommand{\bJ}{\mathbf{J}}

\newcommand{\veps}{\varepsilon}
\newcommand{\vtheta}{\vartheta}
\newcommand{\para}{\parallel}
\newcommand{\Ups}{\Upsilon}
\newcommand{\Rayl}{\text{Rayl}}
\newcommand{\rms}{\text{RMS}}
\newcommand{\ITU}{\text{ITU}}

\begin{document}
\title{
A Material-Aware Channel Model for Efficient CKM Generation via Environment Reconstruction
}

\author{
	\IEEEauthorblockN{
		Xuancheng Zhu\IEEEauthorrefmark{1}\IEEEauthorrefmark{2},
        Zhenjun Dong\IEEEauthorrefmark{2},
		Yong Zeng\IEEEauthorrefmark{1}\IEEEauthorrefmark{2},
        Cheng-Xiang Wang\IEEEauthorrefmark{1}\IEEEauthorrefmark{2}\\
	\IEEEauthorblockA{\IEEEauthorrefmark{1}National Mobile Communications Research Laboratory, Southeast University, Nanjing 210096, China\\}
	\IEEEauthorblockA{\IEEEauthorrefmark{2}Purple Mountain Laboratories, Nanjing 211111, China\\}
	Email: 230268975@seu.edu.cn, dongzhenjun@pmlabs.com.cn, \{yong\_zeng,chxwang\}@seu.edu.cn}
}

\maketitle

\begin{abstract}
Channel knowledge map (CKM) is a promising technology for environment-aware wireless communication, sensing, and localization in 6G networks. 
Accurate CKM generation requires precise reconstruction of the environment, including 3D geometries and scatterer materials, typically from multi-modal sensory observations such as LiDAR point clouds and sparse channel measurements. 
While the former is relatively easy to acquire, materials remain difficult to obtain directly from sparse channel measurements due to the lack of an explicit channel model linking them. 
To fill this gap, this paper proposes a material-aware channel model that explicitly characterizes the influence of scatterer materials on the wireless channel. 
Based on this model, an iterative gradient descent based material reconstruction algorithm is proposed. 
Full wave simulation results validate the developed model and the proposed algorithm, demonstrating their potentials for efficient CKM generation via environment reconstruction.
\end{abstract}

\section{Introduction}\label{sec:intro}
Sixth-generation (6G) networks target Tbps level data rates, sub-ms latency, $10^8$ devices/km$^2$ connectivity density, and centimeter-level positioning accuracy \cite{ref:6G}. 
To this end, higher frequency bands, XL-MIMO \cite{ref:XL_MIMO_commun}, and integrated sensing and communication (ISAC) \cite{ref:ISAC} have been developed. 
However, the resulting complexity in XL-MIMO channel estimation, ISAC sensing, and extra high data rate of mmWave/THz systems limits their deployment \cite{ref:channel_estim}.
Recently, environment-aware systems based on channel knowledge maps (CKMs) \cite{ref:CKM} or channel digital twins \cite{ref:radio_pool} have emerged as promising 6G paradigms by exploiting the intrinsic environment and channel relationship. 
Overall, CKM generation methods fall into two categories, including direct end-to-end generation from multi-modal observations via mathematical interpolation \cite{ref:CKM_interpolation}, matrix completion \cite{ref:CKM_matrix_completion}, or deep learning, which requires abundant measurements, and environment reconstruction based generation, which first reconstructs 3D geometry and materials, then applies diffusion models \cite{ref:CKM_diff}, flow matching \cite{ref:CKM_flow_matching}, or ray tracing \cite{ref:sionna} to generate channel knowledge. 
This approach offers \emph{explainability, generalizability, and improved data efficiency}, but material acquisition remains challenging despite the already availability of 3D geometry LiDAR point clouds.

In fact, inferring materials from sparse channel measurements for CKM generation is more promising than recognizing from visuals \cite{ref:visual_mat_reco} or ISAC \cite{ref:ISAC_mat_reco}, as it captures how material impacts on the channel. 
However, this method requires a model explicitly linking material properties to the channel response. 
While statistical models \cite{ref:3GPP} omit materials; the unified channel model \cite{ref:channel_model}, geometry-based stochastic model (GBSM), and pervasive models \cite{ref:perva_geo} incorporate materials only implicitly via radar cross section (RCS), which depends on material, directions, and frequency; deterministic models \cite{ref:det_channel} rely on numerical computation, this paper develops a material-aware channel model based on electromagnetic (EM) theory that \emph{explicitly characterizes scatterer materials via their relative permittivities and conductivities}, and proposes an iterative gradient descent based material reconstruction algorithm. 
Finally, numerical results based on full wave simulation are provided to validate the effectiveness of the proposed models, demonstrating their potentials for effective CKM generation.

\section{System Model}\label{sec:sys_model}
As illustrated in Fig. \ref{fig:sys_model}, we consider a target environment with $S$ scatterers of uniform materials in a 3D Cartesian coordinate, where material of scatterer $s$ is denoted by $\{\veps_s,\sigma_s\}$. 
The transmitter and receiver locate at $\bq_\tT,\bq_\tR\in\mathbb{R}^{3\times1}$, and they are equipped with a single isotropic antenna. 
After clutter cancellation \cite{ref:clutter_cancel}, channel impulse response (CIR) is given by:
\begin{equation}
    h(\bq_\tT,\bq_\tR,\tau)=\sum_{s=1}^Sh_s(\bq_\tT,\bq_\tR)\delta(\tau-\tau_s),
    \label{equ:overall_channel}
\end{equation}
where $h_s(\bq_\tT,\bq_\tR)\in\mathbb{C}$ denotes the component governed by the scatterer $s\in\{1,2,\cdots,S\}$; $\tau_s$ denotes corresponding delay.
The material reconstruction problem is formulated by inferring materials of $S$ uniform scatterers denoted by $\bm\in\mathbb{R}^{2S\times1}$:
\begin{equation}
    \bm=\left[\veps_1,\sigma_1,\veps_2,\sigma_2,\cdots,\veps_S,\sigma_S\right]^T,
    \label{equ:material_set}
\end{equation}
from $N$ spatially sparse channel measurements denoted by:
\begin{equation}
    \cH=\left\{h(\bq_\tT,\bq_1,\tau),h(\bq_\tT,\bq_2,\tau),\cdots,h(\bq_\tT,\bq_N,\tau)\right\},
    \label{equ:sparse_channel}
\end{equation}
where $\bq_n\in\mathbb{R}^{3\times1},n\in\{1,2,\cdots,N\}$ denotes channel measuring positions.
\begin{remark}\label{rem:paper_work}
    \emph{
    Expressing $h_s(\bq_\tT,\bq_\tR)$ in \eqref{equ:overall_channel} as an explicit function of $\{\veps_s,\sigma_s\}$ establishes a direct link between materials and the channel, which enables the reconstruction of $\bm$ from $\cH$ with precise geometries of $S$ scatterers as priori knowledge. 
    }
\end{remark}

\begin{figure}
    \centering
    \includegraphics[width=0.6\linewidth]{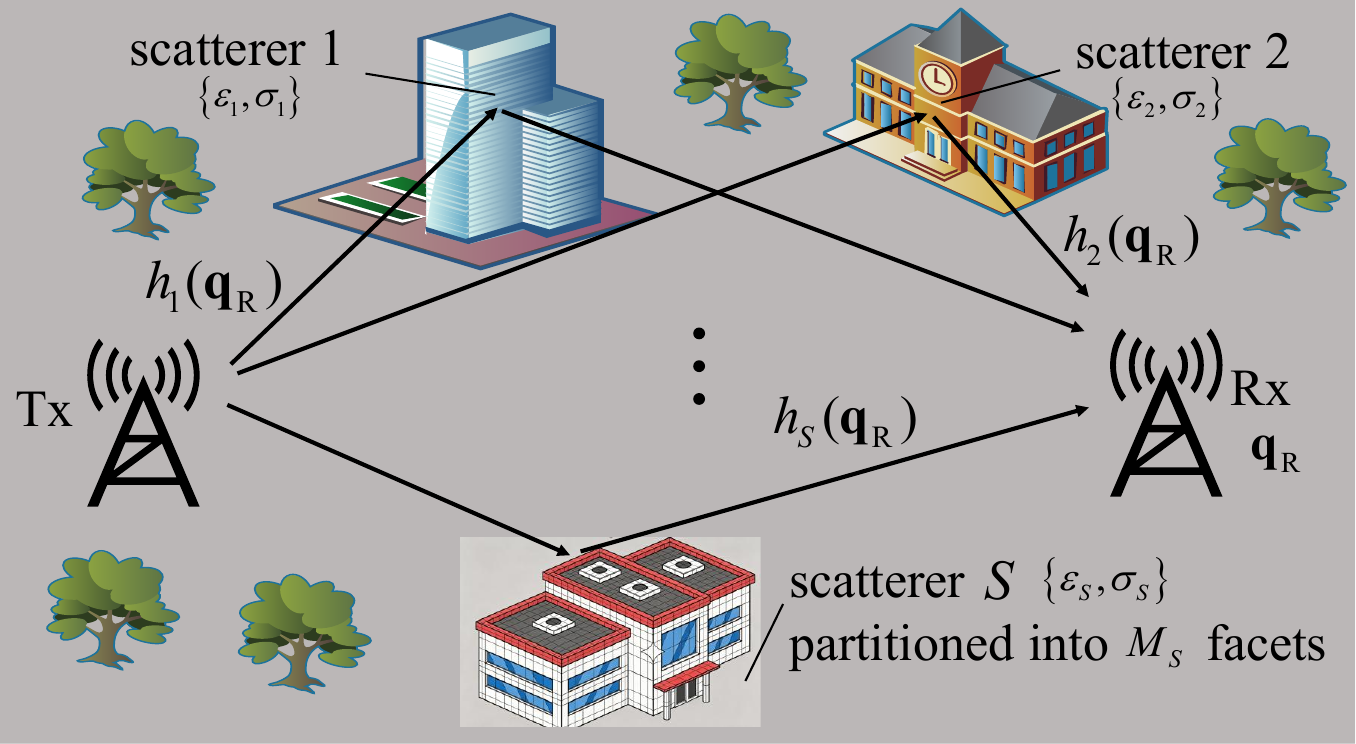}
    \caption{
    A multi-path wireless channel with $S$ scatterers, and the $s$ scatterer is partitioned into $M_s$ rectangular facets.}
    \label{fig:sys_model}
\end{figure}

\section{Material-aware channel model}\label{sec:facet_reflection}
\subsection{Channel response relative to materials}
The channel component $h_s(\bq_\tT,\bq_\tR)$ in \eqref{equ:overall_channel} governed by the scatterer $s$ is expressed by the far-field radar equation \cite{ref:EM_book}:
\begin{equation}
    h_s(\bq_\tT,\bq_\tR)=\sqrt{\frac{P_{\tR,s}}{P_{\text{T}}}}e^{j\varphi_s}=\frac{\lambda\sqrt{\Xi_s(\bq_\tT,\bq_\tR)}e^{j\varphi_s}}{\sqrt{4\pi}(4\pi||\Delta\breve{\bq}_s||\cdot||\Delta\tilde{\bq}_s||)},
    \label{equ:channel_s}
\end{equation}
where $\varphi_s\in(-\pi,\pi]$ denotes the complex phase of $h_s(\bq_\tT,\bq_\tR)$; $P_{\tR,s}/P_\tT$ denotes the ratio of received and emitted power; $\lambda=\frac{c}{f_c}$ is the wavelength and $c$ denotes the light speed; $\Delta\breve{\bq}_s \in \mathbb{R}^{3\times 1}$ points from transmitter towards scatterer $s$, and $\Delta\tilde{\bq}_s = \bq_\tR - \Delta\breve{\bq}_s - \bq_\tT \in \mathbb{R}^{3\times 1}$; $\Xi_s(\bq_\tT,\bq_\tR)$ denotes the far-field bistatic RCS of the scatterer $s$. 
Intuitively, the delay $\tau_s=(||\Delta\breve{\bq}_s||+||\Delta\tilde{\bq}_s||)/c$.
In the following sections, we derive an explicit expression of $\Xi_s(\bq_\tT,\bq_\tR)$ w.r.t. $\{\veps_s,\sigma_s\}$ by meshing the surface of scatterer $s$ into $M_s$ rectangular facets, and each is uniquely characterized by the parameter set:
\begin{equation}
    \cP_{m_s}=\left\{\bq_{m_s},\eta_{m_s},\vtheta_{m_s},\xi_{m_s},L_{\tH,m_s},L_{\tV,m_s},\veps_s,\sigma_s\right\},
    \label{equ:para_set}
\end{equation}
where $m_s\in\{1,2,\cdots,M_s\}$; $\bq_{m_s}\in\mathbb{R}^{3\times1}$ denotes the center position vector; $\eta_{m_s},\vtheta_{m_s},\xi_{m_s}$ are yaw, pitch, and roll angles, respectively; $L_{\tH,m_s},L_{\tV,m_s}$ describe the size of facet $m_s$. 
Thus, by defining $\Delta\breve{\bq}_s=\frac{1}{M_s}\sum_{m_s}\bq_{m_s}-\bq_\tT$, $\Xi(\bq_\tT,\bq_\tR)$ is given by \cite{ref:multi_facets_RCS}:
\begin{equation}
    \Xi_s(\bq_\tT,\bq_\tR)=\left|\sum_{m_s=1}^{M_s}\sqrt{\tS(\cP_{m_s},\ba_\tI,\ba_\tR)}e^{-j\frac{2\pi}{\lambda||\Delta\breve{\bq}_s||}\Delta\breve{\bq}_s^T\hat{\bn}_{m_s}}\right|^2,
    \label{equ:scatterer_RCS}
\end{equation}
where $\hat{\bn}_{m_s}\in\mathbb{R}^{3\times1}$ denotes the norm vector of facet $m_s$ pointing from interior to outside; $\ba_\tI, \ba_\tR\in\mathbb{R}^{3\times1}$ denote the unit incident and observation direction; $\tS(\cP_{m_s},\ba_\tI,\ba_\tR)$ denotes the bistatic RCS of facet $m_s$ with parameter set $\cP_{m_s}$.
\subsection{The closed-form expression of facet's RCS}\label{subsec:facets_geo_model}
As illustrated in Fig. \ref{fig:facets_geo_model}, we start from a simple scatterer with only 1 facet located at the origin of a 3D Cartesian coordinate system, with the parameter set $\cP=\{\mathbf{0}_{3\times1},0,0,0,L_\tH,L_\tV,\veps,\sigma\}$. 
$\mathbf{l}=[0,{l}_y,{l}_z]^T\in\mathbb{R}^{3\times1}$ denotes a point on the facet. 
Let incidence and observation directions be $\ba_\tI = {\Delta\breve{\bq}}/{||\Delta\breve{\bq}||} = [-\sin\theta_\tI\cos\phi_\tI, -\sin\theta_\tI\sin\phi_\tI, -\cos\theta_\tI]^{{T}}$ and $\ba_\tR={\Delta\tilde{\bq}}/{||\Delta\tilde{\bq}||}= [\cos\phi_\tR\sin\theta_\tR, \sin\phi_\tR\sin\theta_\tR, \cos\theta_\tR]^{{T}}$, where $\phi_\tI, \phi_\tR, \theta_\tI, \theta_\tR$ are effective azimuth and zenith angles of incidence and observation; $d_\tR=||\Delta\tilde{\bq}||$ denotes observation distance; $\gamma\in[0,\frac{\pi}{2}]$ denotes the polarization angle.
\begin{remark}\label{rem:incident_angle}
    \emph{
    In this paper, the reference directions of an azimuth angle $\phi$ and a zenith angle $\theta$ are defined by the positive $x$-axis and $z$-axis with counterclockwise direction as positive direction, thereby $\phi_\tI,\phi_\tR\in\left[-\frac{\pi}{2},\frac{\pi}{2}\right]$ and $\theta_\tI,\theta_\tR\in[0,\pi)$. 
    }
\end{remark}

The material properties are captured by the reflection coefficients given by:
\begin{lemma}\label{lem:ref_coef}
    The vertical and parallel polarization reflection coefficients are given by \cite{ref:EM_book}:
    \begin{equation}
        \Gamma_\perp(\veps,\sigma)=\frac{\cos\rho-\sqrt{\chi(\veps,\sigma)-\sin^2\rho}}{\cos\rho+\sqrt{\chi(\veps,\sigma)-\sin^2\rho}}\in\mathbb{C},
    \label{equ:ref_coeff_perp}
    \end{equation}
    \begin{equation}
        \Gamma_\parallel(\veps,\sigma)=\frac{\sqrt{\chi(\veps,\sigma)-\sin^2\rho}-\chi(\veps,\sigma)\cos\rho}{\sqrt{\chi(\veps,\sigma)-\sin^2\rho}+\chi(\veps,\sigma)\cos\rho}\in\mathbb{C},
        \label{equ:ref_coef_para}
    \end{equation}
    where $\chi(\veps,\sigma)=\veps-j\frac{\sigma}{\omega\veps_0}$ with $\veps_0\approx8.8542\times10^{-12}\;\text{F}/\text{m}$ and $\omega=2\pi f_c$; $\rho=\arccos(\cos\phi_\tI\sin\theta_\tI)$.
\end{lemma}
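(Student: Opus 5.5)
The plan is to derive these as the standard Fresnel reflection coefficients for a plane wave incident from free space onto a homogeneous, lossy, non-magnetic half-space. Two pieces of geometry then have to be matched to the paper's conventions: the incidence angle $\rho$ and the complex relative permittivity $\chi(\veps,\sigma)$.

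\textbf{Step 1: material description.} Start from Maxwell's curl equation for $\bH$ with time dependence $e^{j\omega t}$. In a medium with conductivity $\sigma$ and permittivity $\veps\veps_0$ we have
\begin{equation*}
\nabla\times\bH=(\sigma+j\omega\veps\veps_0)\bE=j\omega\veps_0\Bigl(\veps-j\tfrac{\sigma}{\omega\veps_0}\Bigr)\bE .
\end{equation*}
So the scatterer behaves as a dielectric with complex relative permittivity $\chi(\veps,\sigma)$. Its relative intrinsic impedance is $1/\sqrt{\chi}$, since $\mu=\mu_0$ is assumed.

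\textbf{Step 2: angle of incidence.} The facet in $\cP=\{\mathbf{0},0,0,0,\dots\}$ lies in the $y$--$z$ plane, so its normal is $\pm\hat{\mathbf{x}}$. The incidence direction is $\ba_\tI$. Taking the outward normal on the illuminated side gives
\begin{equation*}
\cos\rho=|\ba_\tI^T\hat{\mathbf{x}}|=\sin\theta_\tI\cos\phi_\tI ,
\end{equation*}
where the absolute value can be dropped because $\phi_\tI\in[-\frac{\pi}{2},\frac{\pi}{2}]$ and $\theta_\tI\in[0,\pi)$ (Remark \ref{rem:incident_angle}). This yields $\rho=\arccos(\cos\phi_\tI\sin\theta_\tI)$.

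\textbf{Step 3: boundary conditions.} Use Snell's law, $\sin\rho=\sqrt{\chi}\sin\rho_t$, together with $\sqrt{\chi}\cos\rho_t=\sqrt{\chi-\sin^2\rho}$.

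For perpendicular (TE) polarization, continuity of tangential $\bE$ gives $1+\Gamma_\perp=T$. Continuity of tangential $\bH$ gives $(1-\Gamma_\perp)\cos\rho=T\sqrt{\chi}\cos\rho_t$. Eliminating $T$ gives \eqref{equ:ref_coeff_perp}.

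For parallel (TM) polarization, the analogous equations give
\begin{equation*}
\Gamma_\parallel=\frac{\cos\rho_t-\sqrt{\chi}\cos\rho}{\cos\rho_t+\sqrt{\chi}\cos\rho}\quad\text{(up to the sign convention for the reflected field)}.
\end{equation*}
Multiplying numerator and denominator by $\sqrt{\chi}$ produces exactly \eqref{equ:ref_coef_para}.

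\textbf{Main obstacle.} The main difficulty is bookkeeping rather than substance, in two places. First, the branch of $\sqrt{\chi-\sin^2\rho}$ must be chosen with nonnegative real part, with imaginary part negative for the $e^{j\omega t}$ convention, so that the transmitted wave decays. Second, the sign convention for $\Gamma_\parallel$ must match the cited reference \cite{ref:EM_book}. The authors likely just cite the result, so I would verify both conventions against that text.
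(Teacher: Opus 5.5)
Your proposal is correct and follows the same route as the paper's Appendix~A: identify $\rho$ as the angle between $\ba_\tI$ and the facet normal $\hat{\bx}$, drop the absolute value using the angle ranges of Remark~\ref{rem:incident_angle}, and apply Snell's law with the complex relative permittivity $\chi(\veps,\sigma)$ together with $\sqrt{\chi}\cos\rho_\tet=\sqrt{\chi-\sin^2\rho}$. The only difference is that the paper quotes the Fresnel coefficients directly in impedance form with $Z_0/\sqrt{\chi}$, which become your TE and TM expressions after multiplying numerator and denominator by $\sqrt{\chi}/Z_0$; you instead rederive them from tangential-field continuity and also fix the square-root branch.
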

\begin{IEEEproof}
    Please refer to Appendix \ref{app:ref_coef}
\end{IEEEproof}

\begin{figure}[!t]
    \centering
    \includegraphics[width=0.5\linewidth]{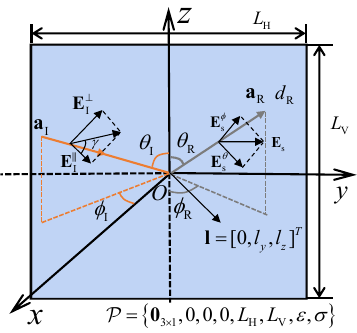}
    \caption{
    An illustration of a facets with size $L_\tH\times L_\tV$, whose relative permittivity and conductivity are denoted by $\veps$ and $\sigma$, respectively. 
    }
    \label{fig:facets_geo_model}
\end{figure}
The wireless signal is modeled as the plane EM wave, and the incident electric field phasor at $\bl$ is given by:
\begin{equation}
    \bE_\tI(\ba_\tI,\mathbf{l})=E_\tI e^{-j\frac{2\pi}{\lambda}\mathbf{a}_\tI^T\mathbf{l}}(\sin\gamma\hat{\bE}_\perp +\cos\gamma\hat{\bE}_\parallel),
    \label{equ:incident_E}
\end{equation}
where $E_\tI\in\mathbb{C}$;  $\hat{\bE}_\perp,\hat{\bE}_\parallel\in\mathbb{R}^{3\times1}$ are given by:
\begin{equation}
    \hat{\bE}_\perp=\frac{\hat{\bx}\times\ba_\tI}{||\hat{\bx}\times\ba_\tI||}=\frac{\cos\theta_\tI\hat{\by}-\sin\phi_\tI\sin\theta_\tI\hat{\bz}}{||\hat{\bx}\times\ba_\tI||},
    \label{equ:inci_perp_dir}
\end{equation}
\begin{equation}
    \hat{\bE}_\parallel=(\ba_\tI\times\hat{\bE}_\perp)/||\ba_\tI\times\hat{\bE}_\perp||.
    \label{equ:inci_para_dir}
\end{equation}
\begin{theorem}\label{the:current}
    After interacting with the facet, the induced surface current $\bJ(\bl)\in\mathbb{C}^{3\times1}$ at $\bl$ is given by:
    \begin{equation}
        \begin{aligned}
            \bJ(\bl)&= \left[\frac{\sin\gamma\cos\phi_\tI\sin\theta_\tI(1-\Gamma_\perp)}{||\ba_\tI\times\hat{\bE}_\perp||}(\cos\theta_\tI\hat{\by}-\sin\phi_\tI\sin\theta_\tI\hat{\bz})\right.\\
            &+\left.\frac{\cos\gamma(\Gamma_\para-1)}{||\ba_\tI\times\hat{\bx}||}(\cos\theta_\tI\hat{\bz}+\sin\phi_\tI\sin\theta_\tI\hat{\by})\right]\frac{E_\tI e^{-j\frac{2\pi}{\lambda}\ba_\tI^T\bl}}{Z_0},
        \end{aligned}
        \label{equ:current}
    \end{equation}
    where $\Gamma_\perp=\Gamma_\perp(\veps,\sigma),\Gamma_\parallel=\Gamma_\para(\veps,\sigma)$ are given in \emph{Lemma} \ref{lem:ref_coef}.
\end{theorem}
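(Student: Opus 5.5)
We use the physical-optics (Kirchhoff) approximation. On the illuminated side of the facet, the induced electric surface current is taken to be $\bJ(\bl)=\hat{\bn}\times\bH_{\text{tot}}(\bl)$. Here $\hat{\bn}=\hat{\bx}$ is the outward normal of the facet, which lies in the $yz$-plane. The total tangential magnetic field is $\bH_{\text{tot}}=\bH_\tI+\bH_\tR$, the sum of the incident and locally reflected plane-wave fields. The reflected field is obtained from the incident field by applying the Fresnel coefficients of Lemma \ref{lem:ref_coef} to its two polarization components.

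\textbf{Step 1: decompose the incident field.} Starting from \eqref{equ:incident_E}, the $\perp$ part has amplitude $E_\tI\sin\gamma$ along $\hat{\bE}_\perp$ in \eqref{equ:inci_perp_dir}. The $\parallel$ part has amplitude $E_\tI\cos\gamma$ along $\hat{\bE}_\parallel$ in \eqref{equ:inci_para_dir}. For each part the magnetic field is $\bH=\ba\times\bE/Z_0$. Hence $\bH_\tI=\frac{E_\tI}{Z_0}e^{-j\frac{2\pi}{\lambda}\ba_\tI^T\bl}\left(\sin\gamma\,\ba_\tI\times\hat{\bE}_\perp+\cos\gamma\,\ba_\tI\times\hat{\bE}_\parallel\right)$.

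\textbf{Step 2: build the reflected field.} The reflected direction is $\ba_\tI'=\ba_\tI-2(\hat{\bx}^T\ba_\tI)\hat{\bx}$. The phase of the reflected wave matches the incident phase on the plane $x=0$, where $\bl$ lies. The perpendicular component $\hat{\bE}_\perp$ is tangential to the facet, so the reflected electric field is $\Gamma_\perp\sin\gamma E_\tI\hat{\bE}_\perp$. The parallel component, by the standard convention implicit in \eqref{equ:ref_coef_para}, reflects with coefficient $\Gamma_\parallel$ on its tangential magnetic field. Equivalently, its tangential electric part picks up a sign flip relative to the perpendicular case.

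\textbf{Step 3: compute the tangential magnetic field.} For the $\perp$ wave, $\hat{\bx}\times(\bH_\tI+\bH_\tR)$ involves only the normal component of $\ba\times\hat{\bE}_\perp$ (the tangential magnetic field is a normal-component cross product). This gives a factor $(1-\Gamma_\perp)\,(\hat{\bx}^T\ba_\tI)$ with $\hat{\bx}^T\ba_\tI=-\cos\phi_\tI\sin\theta_\tI$. The result points along $\hat{\bE}_\perp$, which is $(\cos\theta_\tI\hat{\by}-\sin\phi_\tI\sin\theta_\tI\hat{\bz})/\|\hat{\bx}\times\ba_\tI\|$.

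For the $\parallel$ wave, $\ba_\tI\times\hat{\bE}_\parallel=-\hat{\bE}_\perp$ up to normalization. This vector is tangential, so the incident and reflected magnetic fields add as $(1+\Gamma_\parallel)$ in the Fresnel convention, or as $(1-\Gamma_\parallel)$ depending on the sign convention. Crossing with $\hat{\bx}$ gives the direction $\hat{\bx}\times\hat{\bE}_\perp\propto\cos\theta_\tI\hat{\bz}+\sin\phi_\tI\sin\theta_\tI\hat{\by}$, normalized by $\|\ba_\tI\times\hat{\bx}\|$. The factor $\|\ba_\tI\times\hat{\bE}_\perp\|$ that appears in the denominator of the first term comes from the normalization in \eqref{equ:inci_para_dir}; since $\hat{\bE}_\perp\perp\ba_\tI$, this norm equals $1$.

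\textbf{Step 4: collect terms.} Collecting both parts with the common factor $E_\tI e^{-j\frac{2\pi}{\lambda}\ba_\tI^T\bl}/Z_0$ yields \eqref{equ:current}.

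The main obstacle is sign and convention bookkeeping. First, the orientation of the $\parallel$ reflected field must be fixed so that it reproduces the factor $(\Gamma_\parallel-1)$ rather than $(1+\Gamma_\parallel)$. Second, the overall sign of the $\perp$ term must come out with the stated factor $\cos\phi_\tI\sin\theta_\tI(1-\Gamma_\perp)$. I would settle both by checking the PEC limit $\Gamma_\perp=\Gamma_\parallel=-1$, which must recover the standard PO current $2\hat{\bn}\times\bH_\tI$. I would then choose the reflected-field orientation consistent with Lemma \ref{lem:ref_coef}'s derivation in Appendix \ref{app:ref_coef}.
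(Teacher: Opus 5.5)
Your route is the paper's: the PO current $\hat{\bx}\times(\bH_\tI^\perp+\bH_\tr^\perp+\bH_\tI^\para+\bH_\tr^\para)$, the mirrored direction $\tilde{\ba}_\tI=\ba_\tI\odot[-1,1,1]^T$, and the Fresnel coefficients of Lemma~\ref{lem:ref_coef}. However, you never fix the signs you yourself call the main obstacle, and two of the specifics you do commit to are wrong.

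In Step~2, letting $\Gamma_\para$ scale the tangential magnetic field is the opposite of the convention behind \eqref{equ:ref_coef_para}. That formula tends to $-1$ as $\sigma\to\infty$, so your rule would cancel the tangential $\bH$ on a PEC. The consistent choice is the paper's \eqref{equ:app_H_r_para}, $\bH_\tr^\para=-\Gamma_\para\bH_\tI^\para$. Equivalently, the tangential $\bE$ is scaled by $+\Gamma_\para$, exactly as for $\perp$, with no sign flip. With this choice, $\hat{\bx}\times(\ba_\tI\times\hat{\bx})=\ba_\tI-(\hat{\bx}^T\ba_\tI)\hat{\bx}$ gives the $(\Gamma_\para-1)$ term. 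Your PEC test would select this convention, so that part is recoverable.

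The $\perp$ sign is different: it is not a convention question. We have $\hat{\bx}\times(\ba\times\hat{\bE}_\perp)=-(\hat{\bx}^T\ba)\hat{\bE}_\perp$. (Crossing with $\hat{\bx}$ annihilates the normal component of $\ba\times\hat{\bE}_\perp$; what survives is set by the normal component of $\ba$.) Since $\tilde{\ba}_\tI$ has $x$-component $-\hat{\bx}^T\ba_\tI$, the $\perp$ part is $-(\hat{\bx}^T\ba_\tI)(1-\Gamma_\perp)\hat{\bE}_\perp=\cos\phi_\tI\sin\theta_\tI(1-\Gamma_\perp)\hat{\bE}_\perp$. Your factor $(1-\Gamma_\perp)(\hat{\bx}^T\ba_\tI)$ therefore has the wrong sign.

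The substantive gap is the normalization. Your Step~3 correctly puts the first term along $\hat{\bE}_\perp=(\cos\theta_\tI\hat{\by}-\sin\phi_\tI\sin\theta_\tI\hat{\bz})/\|\hat{\bx}\times\ba_\tI\|$. That is, it carries the denominator $\|\hat{\bx}\times\ba_\tI\|=\sqrt{1-\cos^2\phi_\tI\sin^2\theta_\tI}=\sin\rho$. You then declare the denominator $\|\ba_\tI\times\hat{\bE}_\perp\|$ of \eqref{equ:current} to be $1$. Under that reading, \eqref{equ:current} differs from your own result by the factor $\sin\rho$. So your Step~4 claim that collecting terms yields \eqref{equ:current} fails for generic incidence angles.

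Your own PEC check exposes this. The current $2\hat{\bx}\times\bH_\tI^\perp$ has magnitude $2\cos\rho\,|E_\tI\sin\gamma|/Z_0$. The first term of \eqref{equ:current}, with unit denominator and $\Gamma_\perp=-1$, has magnitude $2\cos\rho\sin\rho\,|E_\tI\sin\gamma|/Z_0$.

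To land on \eqref{equ:current}, you must read that denominator with the unnormalized vector $\hat{\bx}\times\ba_\tI$ in place of $\hat{\bE}_\perp$. It then equals $\|\ba_\tI\times(\hat{\bx}\times\ba_\tI)\|=\|\hat{\bx}\times\ba_\tI\|$. This is the reading under which substituting \eqref{equ:app_H_I_perp} into \eqref{equ:app_induced_current} reproduces the stated first term, and it gives both terms the common normalization $\|\hat{\bx}\times\ba_\tI\|$. Make that identification explicit, or flag the statement's normalization, instead of setting the norm to $1$.
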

\begin{IEEEproof}
    Please refer to Appendix \ref{app:current}.
\end{IEEEproof}

Let $L_\tH$ and $L_\tV$ be chosen such that $||\Delta\breve{\bq}||$ and $||\Delta\tilde{\bq}||$ exceed the Rayleigh distance.
Thus, the scattered wave phasor is partitioned into two orthogonal components given by:
\begin{equation}
    \bE_\ts(\ba_\tR)\approx E_\ts^\phi(\ba_\tR)\hat{\bE}_\ts^\phi+E_\ts^\theta(\ba_\tR)\hat{\bE}_\ts^\theta,
    \label{equ:scattering_phasor}
\end{equation}
where $\bE_\ts(\ba_\tR)\in\mathbb{C}^{3\times1}$; 
$E_\ts^\phi(\ba_\tR),E_\ts^\theta(\ba_\tR)\in\mathbb{C}$ denote scattering wave phasors; $\hat{\bE}_\ts^\phi=-\sin\phi\hat{\bx}+\cos\phi\hat{\by}\in\mathbb{R}^{3\times1}$ and $\hat{\bE}_\ts^\theta=\cos\phi\cos\theta\hat{\bx}+\sin\phi\cos\theta\hat{\by}-\sin\theta\hat{\bz}\in\mathbb{R}^{3\times1}$ denote azimuth and polar directions. 
\begin{theorem}\label{the:scatter_phasor}
    Based on the far-field approximated Stratton-Chu equation \cite{ref:EM_book}, the scattering wave phasors $E_\ts^\phi(\ba_\tR),E_\ts^\theta(\ba_\tR)$ in \eqref{equ:scattering_phasor} are given by:
    \begin{equation}
        \begin{aligned}
            E_\ts^\phi(\ba_\tR)\simeq&-\frac{je^{-j\frac{2\pi}{\lambda}d_\tR}}{2d_\tR\lambda}E_\tI
            \left[
            \sin\gamma\Ups(\veps,\sigma)\Phi_\perp(\ba_\tI,\ba_\tR)\right.\\
            &\left.+\cos\gamma\Psi(\veps,\sigma)\Phi_\para(\ba_\tI,\ba_\tR)
            \right]I(\ba_\tI,\ba_\tR),
        \end{aligned}
        \label{equ:E_r_phi}
    \end{equation}
    \begin{equation}
        \begin{aligned}
            E_\ts^\theta(\ba_\tR)\simeq&-\frac{je^{-j\frac{2\pi}{\lambda}d_\tR}}{2d_\tR\lambda}E_\tI
            \left[
            \sin\gamma\Ups(\veps,\sigma)\Theta_\perp(\ba_\tI,\ba_\tR)\right.\\
            &\left.+\cos\gamma\Psi(\veps,\sigma)\Theta_\para(\ba_\tI,\ba_\tR)
            \right]I(\ba_\tI,\ba_\tR),
        \end{aligned}
        \label{equ:E_r_theta}
    \end{equation}
    where:
    \begin{equation}
        I(\ba_\tI,\ba_\tR)=L_\tH L_\tV\sinc\left(\frac{L_\tH}{\lambda}\zeta(\ba_\tI,\ba_\tR)\right)\sinc\left(\frac{L_\tV}{\lambda}\kappa(\ba_\tI,\ba_\tR)\right),
        \label{equ:inte}
    \end{equation}
    \begin{equation}
        \Phi_\perp(\ba_\tI,\ba_\tR)=\cos\phi_\tI\sin\theta_\tI\cos\theta_\tI\cos\phi_\tR/||\ba_\tI\times\hat{\bE}_\perp||,
        \label{equ:Phi_perp}
    \end{equation}
    \begin{equation}
        \begin{aligned}
            \Phi_\para(\ba_\tI,\ba_\tR)=\sin\phi_\tI\sin\theta_\tI\cos\phi_\tR/||\hat{\bx}\times\ba_\tI||,
        \end{aligned}
        \label{equ:Phi_para}
    \end{equation}
    \begin{equation}
        \begin{aligned}
            \Theta_\perp(\ba_\tI,\ba_\tR)=&\frac{\cos\phi_\tI\sin\theta_\tI}{||\ba_\tI\times\hat{\bE}_\perp||}(\cos\theta_\tI\sin\phi_\tR\cos\theta_\tR\\
            &+\sin\theta_\tI\sin\phi_\tI\sin\theta_\tR),
        \end{aligned}
        \label{equ:Theta_perp}
    \end{equation}
    \begin{equation}
    \begin{aligned}
        \Theta_\para(\ba_\tI,\ba_\tR)=\frac{-\cos\theta_\tI\sin\theta_\tR+\sin\phi_\tI\sin\theta_\tI\sin\phi_\tR\cos\theta_\tR}{||\hat{\bx}\times\ba_\tI||},
    \end{aligned}
        \label{equ:Theta_para}
    \end{equation}    
    with $\zeta(\ba_\tI,\ba_\tR)=\sin\phi_\tI\sin\theta_\tI+\sin\phi_\tR\sin\theta_\tR$; $\kappa(\ba_\tI,\ba_\tR)=\cos\theta_\tI+\cos\theta_\tR$; $\Ups(\veps,\sigma)=1-\Gamma_\perp(\veps,\sigma)$; $\Psi(\veps,\sigma)=\Gamma_\para(\veps,\sigma)-1$.
\end{theorem}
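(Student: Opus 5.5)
The plan is to start from the far-field approximation of the Stratton--Chu equation in its physical-optics form, using only the electric surface current of Theorem \ref{the:current}. For an observation point at distance $d_\tR$ in direction $\ba_\tR$, the scattered field is
\begin{equation*}
\bE_\ts(\ba_\tR)\simeq -\frac{j\omega\mu_0 e^{-j\frac{2\pi}{\lambda}d_\tR}}{4\pi d_\tR}\int_{\text{facet}}\left[\bJ(\bl)-(\bJ(\bl)^T\ba_\tR)\ba_\tR\right]e^{j\frac{2\pi}{\lambda}\ba_\tR^T\bl}\,d\bl .
\end{equation*}
Since $\omega\mu_0/Z_0=2\pi/\lambda$, the prefactor combined with the $1/Z_0$ in \eqref{equ:current} reduces to $-\frac{j}{2\lambda d_\tR}$, which matches the leading factor in \eqref{equ:E_r_phi}--\eqref{equ:E_r_theta}.

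Projecting onto the spherical unit vectors is simpler than subtracting the radial part explicitly. Because $\hat{\bE}_\ts^\phi$ and $\hat{\bE}_\ts^\theta$ are orthogonal to $\ba_\tR$, we have $E_\ts^\phi=(\hat{\bE}_\ts^\phi)^T\bE_\ts$ and $E_\ts^\theta=(\hat{\bE}_\ts^\theta)^T\bE_\ts$. Both unit vectors are evaluated at $(\phi_\tR,\theta_\tR)$.

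The next step is to separate the current \eqref{equ:current} into two parts.
\begin{itemize}
\item The $\perp$ part carries the factor $\sin\gamma\Ups(\veps,\sigma)$ and the vector $\cos\theta_\tI\hat{\by}-\sin\phi_\tI\sin\theta_\tI\hat{\bz}$.
\item The $\para$ part carries the factor $\cos\gamma\Psi(\veps,\sigma)$ and the vector $\sin\phi_\tI\sin\theta_\tI\hat{\by}+\cos\theta_\tI\hat{\bz}$.
\end{itemize}
Both parts share the phase $e^{-j\frac{2\pi}{\lambda}\ba_\tI^T\bl}$, and neither vector depends on $\bl$.

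Dotting each vector with $\hat{\bE}_\ts^\phi=-\sin\phi_\tR\hat{\bx}+\cos\phi_\tR\hat{\by}$ gives $\cos\theta_\tI\cos\phi_\tR$ and $\sin\phi_\tI\sin\theta_\tI\cos\phi_\tR$. With the geometric prefactors these become $\Phi_\perp$ and $\Phi_\para$.

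Dotting with $\hat{\bE}_\ts^\theta$ gives two results. For the $\perp$ vector it gives $\cos\theta_\tI\sin\phi_\tR\cos\theta_\tR+\sin\phi_\tI\sin\theta_\tI\sin\theta_\tR$. For the $\para$ vector it gives $\sin\phi_\tI\sin\theta_\tI\sin\phi_\tR\cos\theta_\tR-\cos\theta_\tI\sin\theta_\tR$. These reproduce $\Theta_\perp$ and $\Theta_\para$ in \eqref{equ:Theta_perp}--\eqref{equ:Theta_para}.

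The norms $||\ba_\tI\times\hat{\bE}_\perp||$ and $||\hat{\bx}\times\ba_\tI||$ are constants carried over from \eqref{equ:current}. In the $\perp$ term, the factor $\cos\phi_\tI\sin\theta_\tI$ stays attached, as the stated expressions require.

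What remains is the surface integral of the phase. For $\bl=[0,l_y,l_z]^T$, the exponent is $j\frac{2\pi}{\lambda}(\ba_\tR-\ba_\tI)^T\bl$. Its $y$-component coefficient is $\sin\phi_\tR\sin\theta_\tR+\sin\phi_\tI\sin\theta_\tI=\zeta$, and its $z$-component coefficient is $\cos\theta_\tR+\cos\theta_\tI=\kappa$. The integral over $[-L_\tH/2,L_\tH/2]\times[-L_\tV/2,L_\tV/2]$ factorizes into two 1D integrals of the form $\int e^{j2\pi a l/\lambda}dl=L\,\sinc(La/\lambda)$, with normalized sinc. This yields $I(\ba_\tI,\ba_\tR)$ in \eqref{equ:inte}. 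The Rayleigh-distance assumption justifies treating $d_\tR$ as constant in the amplitude and linearizing the phase.

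The main obstacle is bookkeeping: keeping the prefactor constants and the sign/orientation conventions of Remark \ref{rem:incident_angle} consistent. In particular, the signs inside $\Theta_\para$ and $\zeta$ depend on $\ba_\tI$ pointing toward the facet while $\ba_\tR$ points away. I would verify each dot product component-wise and confirm that the stated $\simeq$ absorbs any neglected magnetic-current contribution. Physical optics on an impedance surface would add a magnetic current, but the paper's model uses only $\bJ$, so that term is dropped.
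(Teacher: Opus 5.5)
Your proposal is correct and follows the paper's route. The paper also starts from the far-field Stratton--Chu integral projected onto $\hat{\bE}_\ts^\phi$ and $\hat{\bE}_\ts^\theta$, substitutes the current of Theorem \ref{the:current}, and evaluates the phase integral over the facet as a product of two sinc functions giving $I(\ba_\tI,\ba_\tR)$; you additionally make explicit the prefactor reduction to $-j/(2\lambda d_\tR)$ and the component-wise dot products that the paper leaves implicit.
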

\begin{IEEEproof}
    Please refer to Appendix \ref{app:scatter_phasor}.
\end{IEEEproof}

Based on $E_\tI$ in \eqref{equ:incident_E} and $E_\ts$ in \eqref{equ:scattering_phasor}, the facet RCS in \eqref{equ:scatterer_RCS} is defined by:
\begin{small}
    \begin{equation}
        \tS(\cP,\ba_\tI,\ba_\tR)=\lim_{d_\tR\to\infty}4\pi d_\tR^2\frac{|E_\ts|^2}{|E_\tI|^2}\overset{(a)}{=}\lim_{d_\tR\to\infty}4\pi d_\tR^2\frac{|E_\ts^\phi|^2+|E_\ts^\theta|^2}{|E_\tI|^2},
        \label{equ:RCS_def}
    \end{equation}
\end{small}
where $\overset{(a)}{=}$ holds from the orthogonality in \eqref{equ:scattering_phasor}. 
\begin{corollary}\label{cor:RCS_facet}
The RCS of facet with $\cP$, incident direction $\ba_\tI$ and observation direction $\ba_\tR$ is given by:
    \begin{equation}
        \begin{aligned}
            \tS(\cP,\ba_\tI,\ba_\tR)=&\frac{\pi}{\lambda^2}\left[\sin^2\gamma(\Phi_\perp^2+\Theta_\perp^2)|\Ups(\veps,\sigma)|^2+\sin(2\gamma)\right.\\
            &\left.\times(\Phi_\perp\Phi_\para+\Theta_\perp\Theta_\para)\Re(\Ups(\veps,\sigma)\Psi^*(\veps,\sigma))\right.\\
            &\left.+\cos^2\gamma(\Phi_\para^2+\Theta_\para^2)|\Psi(\veps,\sigma)|^2\right]\left|I(\ba_\tI,\ba_\tR)\right|^2,
        \end{aligned}
        \label{equ:RCS_facet}
    \end{equation}
    where $\Phi_\perp,\Phi_\para,\Theta_\perp,\Theta_\para$ denote \eqref{equ:Phi_perp}-\eqref{equ:Theta_para}, respectively.
\end{corollary}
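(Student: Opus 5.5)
The plan is to substitute the two far-field phasors of Theorem \ref{the:scatter_phasor} into the definition \eqref{equ:RCS_def}, using the orthogonal split in \eqref{equ:scattering_phasor}. Both $E_\ts^\phi$ and $E_\ts^\theta$ carry the same prefactor $-\frac{j e^{-j\frac{2\pi}{\lambda}d_\tR}}{2d_\tR\lambda}E_\tI\, I(\ba_\tI,\ba_\tR)$. Its squared modulus is $\frac{|E_\tI|^2|I|^2}{4d_\tR^2\lambda^2}$, because $|j e^{-j2\pi d_\tR/\lambda}|=1$.

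Multiplying by $4\pi d_\tR^2/|E_\tI|^2$ gives the constant $\frac{\pi}{\lambda^2}|I|^2$. The factor $d_\tR^2$ cancels exactly, so the limit $d_\tR\to\infty$ is trivial: the right-hand sides of \eqref{equ:E_r_phi}--\eqref{equ:E_r_theta} are already the far-field asymptotics. It then remains to evaluate
\begin{equation*}
|\sin\gamma\,\Ups\,\Phi_\perp+\cos\gamma\,\Psi\,\Phi_\para|^2+|\sin\gamma\,\Ups\,\Theta_\perp+\cos\gamma\,\Psi\,\Theta_\para|^2 .
\end{equation*}

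The key observation is that $\Phi_\perp,\Phi_\para,\Theta_\perp,\Theta_\para$ in \eqref{equ:Phi_perp}--\eqref{equ:Theta_para} are real, since they are built from trigonometric functions of real angles and real norms. Likewise $\gamma$ is real, so only $\Ups$ and $\Psi$ are complex. For real $a,b$ and complex $u,v$ I will use
\begin{equation*}
|au+bv|^2=a^2|u|^2+b^2|v|^2+2ab\,\Re(uv^*).
\end{equation*}
Applying this to each bracket gives $\sin^2\gamma\,\Phi_\perp^2|\Ups|^2+\cos^2\gamma\,\Phi_\para^2|\Psi|^2+2\sin\gamma\cos\gamma\,\Phi_\perp\Phi_\para\Re(\Ups\Psi^*)$, and the same with $\Theta$ in place of $\Phi$. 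Summing the two brackets and writing $2\sin\gamma\cos\gamma=\sin(2\gamma)$ yields the bracket in \eqref{equ:RCS_facet}, multiplied by $\frac{\pi}{\lambda^2}|I|^2$.

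The only step that needs care is confirming that the geometric factors and $I(\ba_\tI,\ba_\tR)$ are real and independent of $d_\tR$. For the geometric factors this follows from their definitions. $I$ is a product of real sinc functions, which justifies writing $|I|^2$, and its independence of $d_\tR$ lets it pass through the limit. The approximations $\simeq$ in Theorem \ref{the:scatter_phasor} are exact in the far-field limit, so the result holds with equality in \eqref{equ:RCS_def}.
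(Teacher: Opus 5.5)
Your proposal is correct and follows the paper's own route, which is a one-line direct substitution of Theorem~\ref{the:scatter_phasor} into \eqref{equ:RCS_def}. You additionally spell out the steps the paper leaves implicit: the cancellation of $d_\tR^2$ leaving the factor $\frac{\pi}{\lambda^2}|I|^2$, the expansion $|au+bv|^2=a^2|u|^2+b^2|v|^2+2ab\,\Re(uv^*)$ for real geometric factors, and the identity $2\sin\gamma\cos\gamma=\sin(2\gamma)$.
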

\begin{IEEEproof}
    Substituting equations in \emph{Theorem} \ref{the:scatter_phasor} into \eqref{equ:RCS_def} directly yields \eqref{equ:RCS_facet}.
\end{IEEEproof}
\begin{remark}\label{rem:mat_facet}
    \emph{
    Once the geometries of the target environment are determined, the RCSs in \eqref{equ:scatterer_RCS} and \eqref{equ:RCS_facet} only depend explicitly on the material properties $\bm$ through $\Ups(\veps,\sigma)$ and $\Psi(\veps,\sigma)$, which enables the inference of $\bm$ from $\cH$ via \eqref{equ:channel_s}.
    }
\end{remark}

\begin{figure}[!t]
    \centering
    \includegraphics[width=0.8\linewidth]{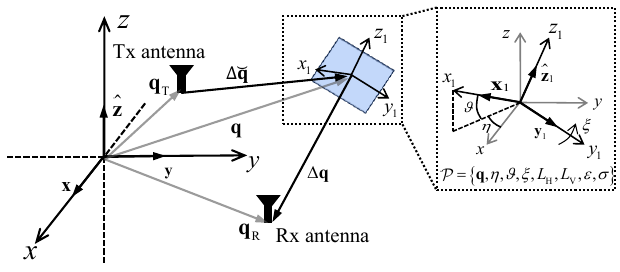}
    \caption{A facets with arbitrary position with $\cP$, where $||\bq||$ and $||\Delta\bq||$ satisfy the far-field constraints in \emph{Lemma} \ref{lem:rayleigh_dis}.}
    \label{fig:single_facets}
\end{figure}

Furthermore, as illustrated in Fig. \ref{fig:single_facets}, we consider the facet locates with arbitrary $\cP=\{\bq,\eta,\vtheta,\xi,L_\tH,L_\tV,\veps,\sigma\}$. 
The effective incident and observation vectors are given by:
\begin{equation}
    \ba_\tI(\cP,\bq_\tT)=-\mathbf{R}^T(\eta,\vtheta,\xi)\Delta\breve{\bq}/||\Delta\breve{\bq}||,
    \label{equ:inci_dir_arb}
\end{equation}
\begin{equation}
    \ba_\tR(\cP,\bq_\tT,\bq_\tR)=\mathbf{R}^T(\eta,\vtheta,\xi)\Delta\tilde{\bq}/||\Delta\tilde{\bq}||,
    \label{equ:obs_dir_arb}
\end{equation}
where $\bR(\eta,\vtheta,\xi)\in\mathbb{R}^{3\times3}$ denotes the unitary rotation matrix given in \eqref{equ:rotation_matrix} that satisfies $\bR\bR^T=\mathbf{I}_{3}$. 
Thus, the incident and observation angles are given by:
\begin{equation}
    \theta_\tI(\cP,\bq_\tT)=\arccos\left(-\ba_\tI^T(\cP)\hat{\bz}\right),
    \label{equ:incident_theta_local}
\end{equation}
\begin{equation}
    \theta_\tR(\cP,\bq_\tT,\bq_\tR)=\arccos(\ba_\tR^T(\cP,\bq_\tT,\bq_\tR)\hat{\bz}),
    \label{equ:receive_theta_local}
\end{equation}
\begin{equation}
    \phi_\tI(\cP,\bq_\tT)=\arcsin\left(-\ba_\tI^T(\cP)\hat{\by}/{\sin(\theta_\tI(\cP))}\right),
    \label{equ:incident_phi_local}
\end{equation}
\begin{equation}
    \phi_\tR(\cP,\bq_\tT,\bq_\tR)=\arcsin\left(\frac{\ba_\tR^T(\cP,\bq_\tT,\bq_\tR)\hat{\by}}{\sin(\theta_\tR(\cP,\bq_\tT,\bq_\tR))}\right).
    \label{equ:receive_phi_local}
\end{equation}
Besides, to ensure the facet locates at the far-field region defined by Rayleigh distance, the link distances $||\Delta\breve{\bq}||$ and $||\Delta\tilde{\bq}||$ should satisfy:
\begin{lemma}\label{lem:rayleigh_dis}
    To locates at the far-field region, $L_\tH,L_\tV$ of facet and $||\Delta\breve{\bq}||,||\Delta\tilde{\bq}||$ should be jointly considered:
    \begin{equation}
        \min\{||\Delta\breve{\bq}||,||\Delta\tilde{\bq}||\}\geq2(L_\tH^2+L_\tV^2)/\lambda-\lambda/32.
        \label{equ:rayleigh_dis}
    \end{equation}
\end{lemma}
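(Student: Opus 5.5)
The plan is to derive \eqref{equ:rayleigh_dis} from the standard phase-error criterion behind the Rayleigh distance. The plane-wave model \eqref{equ:incident_E} and the far-field Stratton--Chu approximation behind Theorem \ref{the:scatter_phasor} are valid when the largest phase discrepancy over the facet is at most $\pi/8$. That discrepancy is the difference between the exact spherical-wave path length and its linearised (plane-wave) version. The facet is a rectangle of size $L_\tH\times L_\tV$ centred at $\bq$, so the farthest point $\bl$ from the centre satisfies $||\bl||=D/2$ with $D=\sqrt{L_\tH^2+L_\tV^2}$, the facet diagonal. I would therefore treat the facet as an aperture of maximal extent $D$ and require the phase condition separately on the transmitter link (distance $||\Delta\breve{\bq}||$) and on the receiver link (distance $||\Delta\tilde{\bq}||$). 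Taking both conditions together is what produces the $\min\{\cdot,\cdot\}$ in \eqref{equ:rayleigh_dis}.

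The key step is an exact, rather than Taylor-truncated, treatment of the worst-case path difference. Consider a point at lateral offset $D/2$ from the centre, viewed broadside at distance $d$. Its exact distance is $\sqrt{d^2+D^2/4}$, while the plane-wave model uses $d$. The condition is
\begin{equation*}
\sqrt{d^2+D^2/4}-d\le \lambda/16 .
\end{equation*}
Moving $d$ to the right-hand side and squaring gives $d^2+D^2/4\le d^2+d\lambda/8+\lambda^2/256$. This rearranges to
\begin{equation*}
d\ge 2D^2/\lambda-\lambda/32 = 2(L_\tH^2+L_\tV^2)/\lambda-\lambda/32 ,
\end{equation*}
which is exactly the bound in \eqref{equ:rayleigh_dis}. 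The correction term $-\lambda/32$ appears precisely because the inequality is squared exactly instead of being expanded to first order, which would give only the classical $2D^2/\lambda$. I would then apply this bound with $d=||\Delta\breve{\bq}||$ and with $d=||\Delta\tilde{\bq}||$ and take the more restrictive of the two.

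The main obstacle is justifying that broadside geometry with offset $D/2$ is the worst case for an arbitrary orientation $\bR(\eta,\vtheta,\xi)$ and arbitrary directions $\ba_\tI,\ba_\tR$. The quantity that matters is the quadratic remainder $\big(||\bl||^2-(\ba^T\bl)^2\big)/(2d)$, which is at most $||\bl||^2/(2d)\le D^2/(8d)$. I would use this to argue that the broadside, diagonal configuration bounds the phase error over all facet points and all directions. The squaring step also needs $\lambda/16+d\ge 0$, which holds trivially since $d>0$.
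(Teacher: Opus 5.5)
Your core computation is the paper's proof. The paper defines $d_{\Rayl}$ as the minimum link distance under normal incidence $(\phi_{\tI},\theta_{\tI})=(0,\frac{\pi}{2})$ for which the corner phase error $\frac{2\pi}{\lambda}\bigl(\sqrt{d^2+(L_\tH^2+L_\tV^2)/4}-d\bigr)$ is at most $\pi/8$. Your exact squaring recovers $2(L_\tH^2+L_\tV^2)/\lambda-\lambda/32$, and applying it to both links gives the $\min$. The paper never claims broadside is the worst case; normal incidence is simply part of the definition.

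The extra step you call the main obstacle would fail, so drop it. Your bound $D^2/(8d)$ on the quadratic remainder is strictly larger than the exact broadside error
\[
\sqrt{d^2+D^2/4}-d=\frac{D^2/4}{\sqrt{d^2+D^2/4}+d},
\]
so it cannot show that broadside dominates. In fact broadside is not the worst case for the exact discrepancy. Let $\hat{\mathbf{u}}$ be the unit vector from the facet centre to the source, $p=\hat{\mathbf{u}}^T\bl$, and $r=\|\bl\|$. The exact path error is $e(p)=\sqrt{d^2-2dp+r^2}-(d-p)$. This is concave in $p$ and is maximized where $\sqrt{d^2-2dp+r^2}=d$, i.e.\ at $p=r^2/(2d)$. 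There $e=r^2/(2d)=D^2/(8d)$ for the corner $r=D/2$, which is reached by tilting the direction only about $D/(4d)$ from the normal toward that corner. Imposing the $\pi/8$ criterion uniformly over directions therefore gives the classical $d\ge 2D^2/\lambda$, and the $-\lambda/32$ term disappears. So the inequality in the lemma holds only under the normal-incidence convention. State that convention explicitly, as the paper does, instead of attempting a worst-case argument.
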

\begin{IEEEproof}
    The Rayleigh distance $d_\Rayl$ is defined as the minimum link distance under normal incidence with $(\phi_\tI,\theta_\tI)=(0,\frac{\pi}{2})$, for which the phase error $\Delta\varphi(d_\Rayl)\leq\pi/8$:
    \begin{equation}
        \Delta\varphi(d_\Rayl)=\frac{2\pi}{\lambda}\left(\sqrt{d_\Rayl^2+\frac{L_\tH^2+L_\tV^2}{4}}-d_\Rayl\right)\leq\frac{\pi}{8}.
    \label{equ:app_phase_error}
    \end{equation}
    Therefore, we have $\min\{||\Delta\breve{\bq}||,||\Delta\tilde{\bq}||\}\geq\min\;d_\Rayl$ in \eqref{equ:rayleigh_dis}.
\end{IEEEproof}

\begin{figure*}
    \begin{equation}
            \bR(\eta,\vtheta,\xi)=
            \begin{bmatrix}
                \cos\eta & -\sin\eta & 0\\
                \sin\eta & \cos\eta & 0\\
                0 & 0 & 1
            \end{bmatrix}
            \times
            \begin{bmatrix}
                \cos\vtheta & 0 & \sin\vtheta\\
                0 & 1 & 0 \\
                -\sin\vtheta & 0 & \cos\vtheta
            \end{bmatrix}
            \times
            \begin{bmatrix}
                1 & 0 & 0 \\
                0 & \cos\xi & -\sin\xi \\
                0 & \sin\xi & \cos\xi
            \end{bmatrix}
            \label{equ:rotation_matrix}
    \end{equation}
    \rule{\textwidth}{.4pt}
\end{figure*}

The RCS of the facet in Fig. \ref{fig:single_facets} is obtained by substituting $\phi_\tI(\cP,\bq_\tT),\theta_\tI(\cP,\bq_\tT),\phi_\tR(\cP,\bq_\tT,\bq_\tR),\theta_\tR(\cP,\bq_\tT,\bq_\tR)$ from \eqref{equ:incident_theta_local}-\eqref{equ:receive_phi_local} into $\Ups(\veps,\sigma),\Psi(\veps,\sigma),\Phi_\perp,\Phi_\para,\Theta_\perp,\Theta_\para$ in \eqref{equ:Phi_perp}-\eqref{equ:Theta_para}, and then into \eqref{equ:RCS_facet}. 
Substituting $\tS(\cP_{m_s},\ba_\tI,\ba_\tR)$ for all $m_s\in\{1,\dots,M_s\}$ into \eqref{equ:scatterer_RCS} yields $\Xi(\bq_\tT,\bq_\tR)$ for all $s\in\{1,\dots,S\}$. 
Moreover, $\varphi_s\simeq\frac{2\pi}{\lambda}\left(||\Delta\breve{\bq}_s||+||\Delta\tilde{\bq}_s||\right)$ for practical simulation such as ray tracing \cite{ref:sionna}, since $||\Delta\breve{\bq}_s||,||\Delta\tilde{\bq}_s||\gg\lambda$ from \emph{Lemma} \ref{lem:rayleigh_dis}. 
In the following sections, an iterative algorithm is proposed to infer $\bm$ from $\cH$ and its performance is verified.

\section{Material Reconstruction}\label{sec:mat_recons}
In this section, an iterative gradient descent-based algorithm is proposed to solve the material reconstruction problem.

\subsection{Evaluation metric and loss function}
To comprehensively evaluate the difference between the channel $\hat{h}^{(t)}(\bq_\tT,\bq_n,\tau)$ induced by material $\bm^{(t)}\in\mathbb{R}^{2S\times1}$ based on \eqref{equ:overall_channel} and \eqref{equ:channel_s} at iteration $t\in\{1,2,\cdots,T_{\max}\}$ and the measurements $h(\bq_\tT,\bq_n,\tau)\in\cH$ via path gain and delay, the metric root mean square (RMS) delay spread is given by:
\begin{equation}
    \tau_\rms(h(\bq_\tT,\bq_\tR,\tau))=\sqrt{\frac{\sum_{s=1}^S|h_s(\bq_\tT,\bq_\tR)|^2(\tau_s-\bar{\tau})^2}{\sum_{s=1}^S|h_s(\bq_\tT,\bq_\tR)|^2}},
    \label{equ:RMS_delay}
\end{equation}
where $\bar{\tau}=\sum_{s=1}^S|h_s(\bq_\tT,\bq_\tR)|^2\tau_s/\sum_{s=1}^S|h_s(\bq_\tT,\bq_\tR)|^2$. 
The mean absolute square error (MASE) is given by:
\begin{equation}
    \cM(x,\hat{x})={|x-\hat{x}|^2}/{|x|^2}.
    \label{equ:MASE_def}
\end{equation}
Thus, the weighted loss function is given by:
\begin{equation}
    L(\cH,\bm^{(t)})=\frac{\sum_{n=1}^Nw_n\cM(\tau_\rms(\hat{h}^{(t)}),\tau_\rms(h))}{\sum_{n=1}^Nw_n},
    \label{equ:MASE}
\end{equation}
where $\hat{h}^{(t)}$ and $h$ denote $\hat{h}^{(t)}(\bq_\tT,\bq_\tR,\tau)$ and $h(\bq_\tT,\bq_\tR,\tau)$; the weight $w_n\in\mathbb{R}^+$ is given by:
\begin{equation}
    w_n=\exp(\cM(\tau_\rms(h(\bq_\tT,\bq_n,\tau),\tau_\rms(\hat{h}^{(t)}(\bq_\tT,\bq_n,\tau)))),
    \label{equ:weight}
\end{equation}
which aims to optimize the position with large deviation. 

\subsection{Optimization problem and the proposed algorithm}
The material reconstruction problem can be formulated as minimizing the MASE across $N$ samples by adjusting $\bm$:
\begin{equation}
    \begin{aligned}
        \min_{\bm}&\;L(\cH,\bm)\\
        \text{s.t.}&\;\veps_s\geq1,\;s\in\{1,2,\cdots,S\}\\
        &\;\sigma_s>0,\;s\in\{1,2,\cdots,S\}.
    \end{aligned}
    \label{equ:mat_recons_opt}
\end{equation}
Due to the complex expressions in \eqref{equ:channel_s}, \eqref{equ:scatterer_RCS}, \eqref{equ:RCS_facet}, \emph{Lemma} \ref{lem:ref_coef}, and the extra complicity from the square root and complex domain \cite{ref:convex_opt}, closed-form solutions are difficult to be obstained. 
We thus propose an iterative gradient descent algorithm to approximately solve \eqref{equ:mat_recons_opt} with tolerable complexity.
The gradient of loss function $L(\cH,\bm^{(t)})$ relative to material $\bm$ is given by:
\begin{equation}
    \nabla_\bm L(\cH,\bm^{(t)})=\left[\frac{\partial L}{\partial\veps_1},\frac{\partial L}{\partial\sigma_1},\cdots,\frac{\partial L}{\partial\veps_S},\frac{\partial L}{\partial\sigma_S}\right]^T\in\mathbb{R}^{2S\times1},
    \label{equ:grad}
\end{equation}
which can be easily obtained via auto gradient algorithms.
Thus, the gradient descent based update is formulated by:
\begin{equation}
    \bm^{(t+1)}=\mathcal{U}(\bm^{(t)}-\mu^{(t)}\nabla_\bm L(\cH,\bm^{(t)})),
    \label{equ:update}
\end{equation}
where $\mu^{(t)}=U\left(\frac{L(\cH,\bm^{(t)})}{L(\cH,\bm^{(0)})}\right)\mu^{(0)}\in[\mu_{\min},\mu_{\max}]$ denotes learning rate; $U(\cdot)$ and $\mathcal{U}(\cdot)$ are the step utility function that restricts $\mu^{(t)}$ and $\bm$.
Finally, based on previous sections, the material reconstruction algorithm is given in Algorithm \ref{algo:mat_recons}. 

\begin{algorithm}[htbp]
	\caption{Material reconstruction via gradient descent.}
	\label{algo:mat_recons}
	\begin{algorithmic}[1]
        \STATE {\textbf{Input:} $S$, $M_s$ for $\forall s$, $\cH$, $\mu^{(0)}$, $\mu_{\min}$, $\mu_{\max}$, $T_{\max}$, $\bq_\tT$;} 
		\STATE {Initialize $\bm^{(0)}\in\mathbb{R}^{2S\times1}$ and $t=0$;}
		\WHILE{$t<T_{\max}$}
		\STATE {Calculate $\hat{h}^{(t)}(\bq_\tT,\bq_n,\tau),\;n\in\{1,\cdots,N\}$ based on $\bm^{(t)}$ with \eqref{equ:channel_s}; }
		\STATE  {Calculate $L(\cH,\bm^{(t)})$ based on \eqref{equ:MASE};}
		\STATE {Calculate $\mu^{(t)}$ based on $L(\cH,\bm^{(t)})$ and $L(\cH,\bm^{(0)})$;}
        \STATE {Calculate $\nabla_\bm L(\cH,\bm^{(t)})$ in \eqref{equ:grad} via auto gradient;}
        \STATE {Update materials $\bm^{(t+1)}$ via \eqref{equ:update};}
        \STATE {Update iteration $t=t+1$;}
		\ENDWHILE
		\STATE {Obtain optimized materials $\bm^{T_{\max}}$;}
		\STATE {Obtain final loss function $L(\cH,\bm^{T_{\max}})$;}
		\STATE {\textbf{Output:} $\bm^{T_{\max}},\;L(\cH,\bm^{T_{\max}})$;}
	\end{algorithmic}
\end{algorithm}

\subsection{Quantitative metric for CKM generation}
CKM accuracy is quantitatively evaluated via the achievable rate $R(\bH)$ of an $M\times K$ MIMO system, where $\bH$ denotes the ground truth channel matrix and $\hat{\bH}_\ITU,\hat{\bH}_\tR$ are obtained by Sionna ray tracing \cite{ref:sionna} using default ITU materials and reconstructed materials, respectively.
Capacity of $\bH$ with noise power $N_0$ is given by:
\begin{equation}
    C(\bH)=\log_2\left(\left|\mathbf{I}_{M\times M}+\frac{1}{N_0}\bH\mathbf{V}\mathbf{Q}(\bH)\mathbf{V}^H\bH^H\right|\right),
    \label{equ:capacity}
\end{equation}
where $\bH=\mathbf{U}\mathbf{\Sigma}\mathbf{V}^H$ denotes singular value decomposition (SVD); $\mathbf{Q}\in\mathbb{R}^{K\times K}$ denote the diagonal water-filling power allocation matrix, where  $[\mathbf{Q}]_{(k,k)}=\max(\text{SNR}N_0/r+\sum_{n=1}^rN_0/rs_n^2-N_0/s_n^2,0)$, $r=\text{rank}(\mathbf{\Sigma})$, $s_n$ is the $n$th singular value of $\boldsymbol{\Sigma}$. 
Achievable rate with imperfect $\hat{\bH}$ generated by CKM at the transmitter side is given by \cite{ref:RIS_book}:
\begin{equation}
    R(\bH)=\log_2\left(\left|\mathbf{I}_{M\times M}+\frac{1}{N_0}\bH\hat{\mathbf{V}}\mathbf{Q}(\hat{\bH})\hat{\mathbf{V}}^H\bH^H\right|\right),
    \label{equ:achievable_rate}
\end{equation}
where $\hat{\bH}=\hat{\mathbf{U}}\hat{\mathbf{\Sigma}}\hat{\mathbf{V}}^H$.
The percentage error is given by:
\begin{equation}
    err(\bH,\hat{\bH})=(C(\bH)-R(\hat{\bH}))/C(\bH).
    \label{equ:error}
\end{equation}
Since $C(\bH)\geq R(\hat{\bH})$, we have $err(\bH,\hat{\bH})\in[0,1)$, and that of $N_\text{sam}$ samples are averaged to validate the accuracy of CKM.

\section{Simulation results}\label{sec:simulation}

The scatterer RCS in \eqref{equ:RCS_facet} is evaluated using the simulation parameters given by $f_c=3.5$ GHz, $L_\tH \times L_\tV = 0.39 \text{ m} \times 0.39 \text{ m}$, $\bq=[10 \text{ m},10 \text{ m},0]^T$, $\eta=\vtheta=\xi=0$, $\{\veps,\sigma\}=\{1.5,10\}$, $||\Delta\breve{\bq}||=d_\tR=10\sqrt{2} \text{ m}$, polarization $\gamma=\pi/2$, and $(\phi_\tI,\theta_\tI)=(-45^\circ,60^\circ)$, by comparing $\tS(\cP,\ba_\tI,\ba_\tR)$ with full wave physical optics (PO) and method of moments (MoM) results $\tS_\text{PO}(\cP,\ba_\tI,\ba_\tR)$ and $\tS_\text{MoM}(\cP,\ba_\tI,\ba_\tR)$ \cite{ref:EM_simu}. 
As shown in Fig. \ref{fig:RCS_compare}, the proposed model agrees well with full wave simulations, achieving a MASE of $-13.88$ dB relative to PO.


\begin{figure}[!t]
    \centering
    \includegraphics[width=0.4\linewidth]{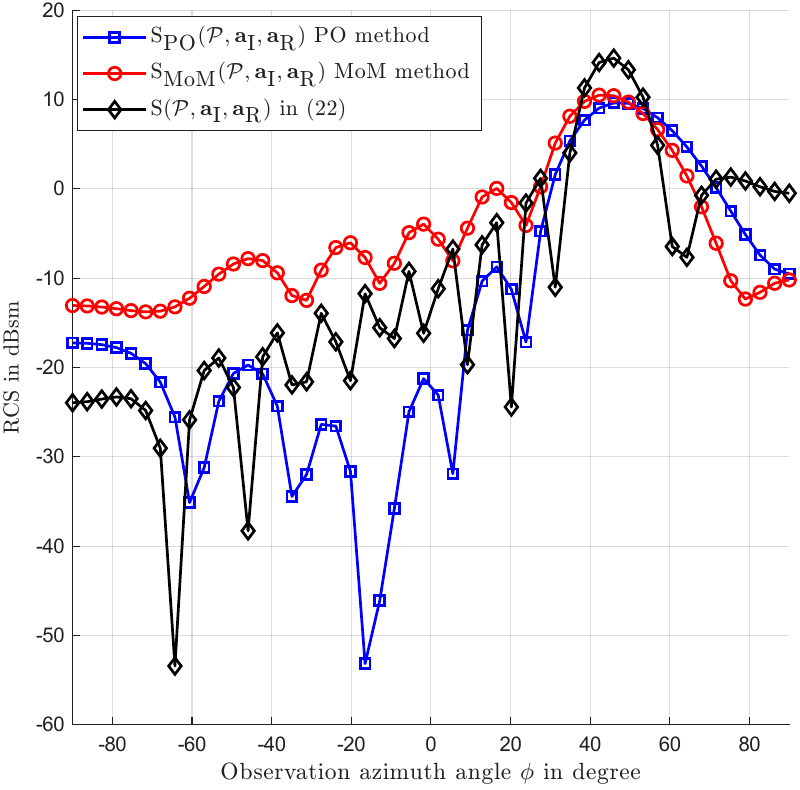}
    \caption{Comparison of the proposed RCS model with the full wave numerical results via PO and MoM with zenith observation angle $\theta_\tR=120^\circ$.}
    \label{fig:RCS_compare}
\end{figure}

The material reconstruction is evaluated in the 3D environment shown as the background in Fig. \ref{fig:loss}, where the the ground truth materials of $S=6$ scatterers follow ITU specifications \cite{ref:ITU_material}, with $f_c=3.5$ GHz, $N=5$, $\mu^{(0)}=0.04$, $T_{\max}=500$, $\bq_\tT=[0,0,0]^T$, $M\times K=8\times8$, $N_\text{sam}=1,000$, and SNR $=90$ dB. 
Comparing $err(\bH,\hat{\bH}_\ITU)$ for brick buildings and concrete ground with $err(\bH,\hat{\bH}_\tR)$ for reconstructed materials in Fig. \ref{fig:loss}, the average percentage error decreases from $11.66\%$ to $4.39\%$, and the median position-wise MASE of $\bH_\tR$ and $\bH$ is $-16.34$ dB, reveling effectiveness of the proposed algorithm. 



\begin{figure}[!t]
    \centering
    \includegraphics[width=\linewidth]{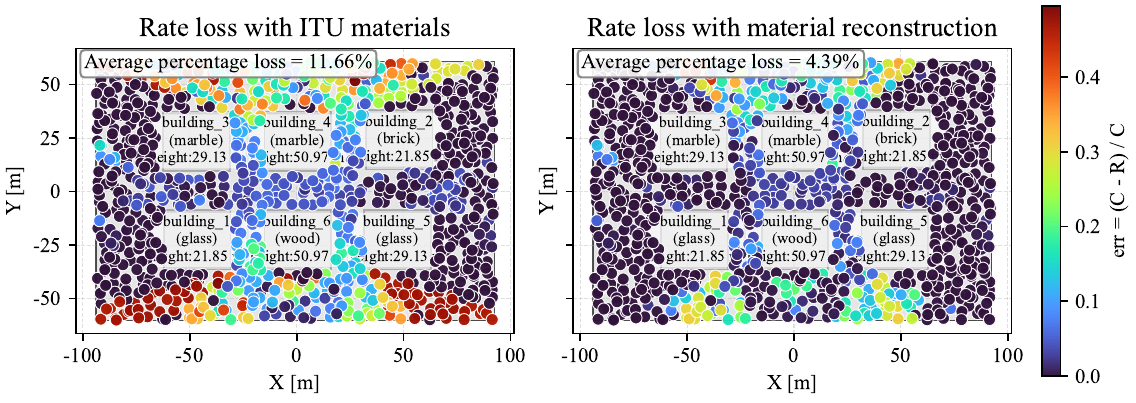}
    \caption{A Comparison between $err(\bH,\hat{\bH}_\ITU)$ and $err(\bH,\hat{\bH}_\tR)$.}
    \label{fig:loss}
\end{figure}

\section{Conclusion}\label{sec:con}
This paper presents a material-aware channel model that explicitly links material properties to the channel response, and proposes an iterative gradient descent material reconstruction algorithm validated by numerical simulations for efficient CKM generation.

{\appendices

\section{Proof of Lemma \ref{lem:ref_coef}}
\label{app:ref_coef}
Let $\rho$ denotes the angle between direction $\ba_\tI$ and $\hat{\bx}$, which is independent of polarization angle $\gamma$ and is given by:
\begin{equation}
    \rho=\arccos\left(|\hat{\bx}^T\ba_\ti(\phi_\ti,\theta_\ti)|\right)=\arccos(\cos\phi_\tI\sin\theta_\tI).
    \label{equ:app_incident_angle}
\end{equation}
Based on the Snell's law \cite{ref:EM_book}, we have:
\begin{equation}
    \Gamma_\perp(\veps,\sigma)=
    \frac{
    Z_0\cos\rho / \sqrt{\chi(\veps,\sigma)}-Z_0\cos\rho_\tet
    }{
    Z_0\cos\rho / \sqrt{\chi(\veps,\sigma)}+Z_0\cos\rho_\tet
    },
    \label{equ:app_ref_perp}
\end{equation}
\begin{equation}
    \Gamma_\para(\veps,\sigma)=
    \frac{
    -Z_0\cos\rho+Z_0\cos\rho_\tet/\sqrt{\chi(\veps,\sigma)}
    }{
    Z_0\cos\rho+Z_0\cos\rho_\tet/\sqrt{\chi(\veps,\sigma)}
    },
    \label{equ:app_ref_para}
\end{equation}
where $\sin\rho_\tet=\sqrt{1/\chi(\veps,\sigma)}\sin\rho$; $Z_0\approx377\;\Omega$ denotes the vacuum wave impedance. 
\eqref{equ:app_ref_perp} and \eqref{equ:app_ref_para} can direcly yield \eqref{equ:ref_coeff_perp} and \eqref{equ:ref_coef_para}.
Thus, the proof of \emph{Lemma} \ref{lem:ref_coef} is completed.

\section{Proof of Theorem \ref{the:current}}
\label{app:current}
The induced surface current at $\bl$ is given by:
\begin{equation}
    \begin{aligned}
        \bJ(\bl)=        \bJ_\perp(\bl)+\bJ_\para(\bl)\overset{(a)}{=}\hat{\bx}\times\left(\bH_\tI^\perp+\bH_\tr^\perp+\bH_\tI^\para+\bH_\tr^\para\right),
    \end{aligned}
    \label{equ:app_induced_current}
\end{equation}
where $\overset{(a)}{=}$ holds from PO \cite{ref:EM_book}; $\bH_\tI^\perp,\bH_\tr^\perp,\bH_\tI^\para,\bH_\tr^\para\in\mathbb{C}^{3\times1}$ are magnetic phasors for incident and reflected signal with vertical or parallel polarizations. 
After interacting, the reflected electric wave phasor at $\bl$  $\bE_\tr(\bl)\in\mathbb{C}^{3\times1}$ is given by Snell's law \cite{ref:EM_book}:
\begin{equation}
    \bE_\tr(\bl)=E_\tI e^{-j\frac{2\pi}{\lambda}\ba_\tI^T\bl}(\underbrace{\Gamma_\perp(\veps,\sigma)\sin\gamma\hat{\bE}_\tr^\perp}_{\bE_\tr^\perp\in\mathbb{C}^{3\times1}}+\underbrace{\Gamma_\para(\veps,\sigma)\cos\gamma\hat{\bE}_\tr^\para}_{\bE_\tr^\para\in\mathbb{C}^{3\times1}}),
    \label{equ:app_reflected_E}
\end{equation}
where $\hat{\bE}_\tr^\perp=\hat{\bE}_\perp\in\mathbb{R}^{3\times1}$.
Thus, we have \cite{ref:EM_book}:
\begin{equation}
    \bH_\tI^\perp=E_\tI\sin\gamma e^{-j\frac{2\pi}{\lambda}\ba_\tI^T\bl}(\ba_\tI\times\hat{\bE}_\perp)/Z_0||\ba_\tI\times\hat{\bE}_\perp||,
    \label{equ:app_H_I_perp}
\end{equation}
\begin{equation}
\bH_\tr^\perp=E_\tI\Gamma_\perp\sin\gamma e^{-j\frac{2\pi}{\lambda}\ba_\tI^T\bl} (\tilde{\ba}_\tI\times\hat{\bE}^\perp_\tr)/Z_0||\tilde{\ba}_\tI\times\hat{\bE}^\perp_\tr||,
    \label{equ:app_H_r_perp}
\end{equation}
\begin{equation}
    \bH_\tI^\para=E_\tI\cos\gamma e^{-j\frac{2\pi}{\lambda}\ba_\tI^T\bl}(\ba_\tI\times\hat{\bx})/Z_0||\ba_\tI\times\hat{\bx}||,
    \label{equ:app_H_I_para}
\end{equation}
\begin{equation}
    \bH_\tr^\para=-E_\tI\Gamma_\para\cos\gamma e^{-j\frac{2\pi}{\lambda}\ba_\tI^T\bl}(\ba_\tI\times\hat{\bx})/Z_0||\ba_\tI\times\hat{\bx}||,
    \label{equ:app_H_r_para}
\end{equation}
where $\tilde{\ba}_\tI=\ba_\tI\odot[-1,1,1]^T$. 
Substituting \eqref{equ:app_H_I_perp}-\eqref{equ:app_H_r_para} into \eqref{equ:app_induced_current} directly yields \eqref{equ:current}. 

\section{Proof of Theorem \ref{the:scatter_phasor}}
\label{app:scatter_phasor}
Based on the far-field approximated Stratton-Chu equation \cite{ref:EM_book}, the scattering wave phasors in \eqref{equ:scattering_phasor} are given by:
\begin{equation}
    E_\ts^\phi(\ba_\tI,\ba_\tR)\simeq -j\frac{kZ_0e^{-jkd_\tR}}{4\pi d_\tR}\underset{S}{\iint}\bJ^T(\bl)\hat{\bE}_\ts^\phi e^{jk\mathbf{a}_\tr^T\bl}\;\text{d}s,
    \label{equ:app_E_s_phi}
\end{equation}
and $E_\ts^\theta(\ba_\tI,\ba_\tR)$ is obtained by substituting $\hat{\bE}_\ts^\phi$ by $\hat{\bE}_\ts^\theta$ in \eqref{equ:app_E_s_phi}.
Substituting \eqref{equ:current} into \eqref{equ:app_E_s_phi} yields:
\begin{equation}
\begin{aligned}
    I(\ba_\tI,\ba_\tR)
    =\int_{-\frac{L_\tV}{2}}^{\frac{L_\tV}{2}}\int_{-\frac{L_\tH}{2}}^{\frac{L_\tH}{2}}e^{j\frac{2\pi}{\lambda}(\mathbf{a}_\tR-\mathbf{a_\tI})^T\bl}\;\text{d}l_y\text{d}l_z,
\end{aligned}
    \label{equ:app_inte}
\end{equation}
which yields \eqref{equ:inte}.
Thus, the proof of \emph{Theorem} \ref{the:scatter_phasor} is completed. 


}

\bibliography{myrefs}

\end{document}